\documentclass[a4paper,11pt]{llncs}

\usepackage{bussproofs}
\usepackage[left=25mm,
 	right=25mm,
 	top=25mm,
 	bottom=25mm]{geometry}
\usepackage{times}
\usepackage{amssymb}
\usepackage{amsmath,bm}
\usepackage{xspace}

\newcommand{\e}[1]{{\bf #1}}
\newcommand{\g}[1]{{\sf #1}}
\newcommand{\Var}[1]{\textit{Var}(#1)}
\newcommand{\DBI}{\e{I}\xspace}
\newcommand{\deps}{\Sigma}
\newcommand{\Dom}{\e{Dom}\xspace}
\newcommand{\schema}{\e{R}\xspace}
\newcommand{\LogSpace}{\textsc{LogSpace}\xspace}
\newcommand{\rul}{\leftarrow}
\newcommand{\incl}{\subseteq}
\newcommand{\stincl}{\subsetneq}
\newcommand{\fact}[1]{\textit{Fact}(#1)}
\newcommand{\sent}[1]{\textit{Sent}(#1)}
\newcommand{\Inst}[1]{\textit{Inst}(#1)}
\newcommand{\chase}[1]{\e{chase}(#1)}
\newcommand{\adom}[1]{\textit{adom}(#1)}
\newcommand{\tup}[1]{(#1)} 

\newcommand{\assg}{\gamma}
\newcommand{\bag}[1]{\{\hspace{-0.07cm}| #1 |\hspace{-0.07cm}\}}
\newcommand{\arunion}[2]{\underset{#1}{\overset{#2}{\bigcup}}}
\newcommand{\normal}[1]{\em #1}
\newcommand{\union}{\cup}
\newcommand{\Nat}{\mathbb{N}}
\newcommand{\set}[1]{\{#1\}}
\newcommand{\chases}[2]{\e{chase}_{#1}(#2)}
\newcommand{\qued}{\hspace*{\fill}$\Box$}
\newcommand{\COUNT}{\e{count}\xspace}
\newcommand{\SUM}{\e{sum}\xspace}
\newcommand{\AVG}{\e{avg}\xspace}


\title{Chases and Bag-Set Certain Answers}
\author{Camilo Thorne}
\institute{
\begin{tabular}{c}
IBM CAS Trento - Trento RISE\\
Piazza Manci 17, 38123, Povo di Trento (Italy)\\
{\tt c.thorne.email@trentorise.eu}
\end{tabular}
}

\begin{document}

\maketitle

\begin{abstract}
In this paper we show that the chase technique
is powerful enough to capture the bag-set semantics of
conjunctive queries over IDBs and
IDs and TGDs. In addition, we argue that in such cases
it provides efficient (\LogSpace) query evaluation
algorithms and that, moreover,
it can serve as a basis for evaluating some restricted
classes of aggregate queries under incomplete information.
\end{abstract}

\section{Introduction}\label{sec1}

One of the main query answering techniques
for unions of conjunctive queries
over incomplete databases
is the chase
\cite{Klug1982B,Vardi1984,Cali2004,ChaseRev2008}.
Incomplete databases (IDBs)
can be seen as
a set \DBI of first order logic (FO) 
ground facts (the database instance) and a set $\deps$ of 
FO  axioms known as integrity constraints or dependencies.
Unions of conjunctive queries (UCQs), on the other hand, are
logical specifications of SQL SELECT-PROJECT-JOIN-UNION queries
belonging to the
positive-existential fragment of FO (with no function symbols).
Answering a query over an incomplete database boils down to
returning the values of all the closed substitutions (or groundings)
under which the query is entailed by the
incomplete database. Such set of values
is known as certain answers \cite{Cali2004,Rosati2007}.
When $\deps$ is a set of inclusion and tuple-generating dependencies
(IDs and TGDs), then the chase is known to provide efficient
(\LogSpace in the size of the data) algorithms
for computing the set of certain answers. 

As we all know, queries in
relational database systems return, 
in practice, bags or multisets of answers.
This is because it can be quite costly to delete
repeated answers \cite{Vardi1993,Abiteboul1995}. Arguably,
this property still holds for incomplete databases
that integrate relational datasources
The main contributions
of this paper are thus the following:
\textit{(i)} extending the notion of certain answers
to cover bags and \textit{(ii)} showing that
chases are powerful enough to
capture these bags of certain answers when
evaluating a UCQ over an IDB with IDs and TGDs. These results
outline algorithms that, on the one hand, efficiently compute these
bags of certain answers and, on the other hand, can be extended
to more expressive classes of queries.

\section{Preliminaries}

Let $\schema := \set{R_{1},...,R_{n}}$ 
be a set of relation symbols each with
an associated
arity (a positive integer),
called
{\em database schema}. Let
\Dom be a countably infinite set of constants
called {\em domain}. 
A {\em tuple} over \Dom is any finite
sequence $\vec{c}$ of domain constants.
A {\em fact} over \schema and \Dom is a relational atom
over \schema and \Dom.
It is said to be {\em ground} when it contains no variables and
{\em positive} when, in addition, it is not negated.
We will denote by $\fact{\schema}$ the set of all positive facts.
Furthermore, we will denote by $\sent{\schema}$ the set of all FO sentences
(closed formulas) over \schema and \Dom.
Notice that $\fact{\schema} \stincl \sent{\schema}$.

A {\em union of conjunctive queries} $q$ (UCQ)
of arity $n$
is a rule over $\schema$ of the form 
$
q(\vec{x}) \rul
\phi_{0}(\vec{x},\vec{y}_0,\vec{c}_0) \lor ... \lor 
\phi_{k}(\vec{x},\vec{y}_k,\vec{c}_k)
$ 
where $q(\vec{x})$ is called the query's {\em head}, 
$\vec{x}$ is a sequence of $n \geq 0$
{\em distinguished variables} and
$\phi_{0}(\vec{x},\vec{y}_0,\vec{c}_0) \lor ... \lor 
\phi_{k}(\vec{x},\vec{y}_k,\vec{c}_k)$ is the
the query's {\em body}, where
the $\phi_i$s, for $i \in [0,k]$, are conjunctions of atoms.
A UCQ is said to be {\em boolean} if $\vec{x}$
is an empty sequence.

A {\em database instance} or, simply, {\em database} (DB),
of \schema is a finite set 
$\DBI \stincl \fact{\schema}$
of ground facts. The {\em active domain} of \e{I}, $\adom{\DBI}$,
denotes the (finite) set
of constants that occur among the facts in \DBI.
We will denote by $\Inst{\schema}$ the class of all such DBs.
An {\em incomplete database} (IDB) of \e{R} is a pair 
$
\tup{\deps, \DBI}
$ 
where $\deps \incl \sent{\schema}$ is a finite set of
FO axioms called {\em dependencies} or {\em integrity constraints}
and \DBI is a DB of \schema.

A {\em grounding} over UCQ $q$ is a function
$\assg \colon \Var{q} \to \Dom$,
where $\textit{Var}(q)$ denotes the set of variables 
of $q$. 
Groundings can be extended to sequences of variables,
FO atoms and FO formulas in the usual way. We will
denote by $q\assg$ the grounding of the body of a UCQ $q$ by 
$\assg$.
The set of all
such groundings will be denoted $^{\Var{q}}\Dom$.

A grounding $\assg$ is said to {\em satisfy} a UCQ $q$ w.r.t. a
DB \e{I} iff $\DBI \models q\assg$, i.e., if \DBI is a model
(in FO terms) of $q\assg$, the grounded body of $q$. It is said to
{\em satisfy} $q$ w.r.t. to an IDB $\tup{\deps,\DBI}$
iff $\deps \union \DBI \models q\assg$, i.e., if $\deps \cup \DBI$ entails
(in \g{FO} terms) $q\assg$  \cite{Abiteboul1995,Cali2004,Rosati2007}. 

\section{Bag-Set Semantics and Chases}

Let $X$ be a set. A {\em bag}
or {\em multiset} $B$ (over $X$) is a function $B \colon X \to \Nat \union \set{\infty}$.
We say that an $x \in X$ {\em belongs} to $B$
whenever $B(x) \geq 1$, where the integer $B(x)$ is called the
{\em multiplicity} of $x$ in $B$. 
As with sets, bags will be denoted by extension or intension using the special
brackets $\bag{\cdot}$.

\begin{definition}
[\e{Bag-set answers}]
Let \e{I} be a DB and
$q$ a CQ of arity $n$ over a schema \e{R}. We define the {\em bag-set
answers} of $q$ as the bag:
$$
q[\DBI] := \bag{\vec{c} \in \adom{\DBI}^n \mid \vec{c} = \assg(\vec{x}) \text{ for some } \assg \text{ s.t. } 
\DBI \models q\assg}. 
$$
\end{definition}

\begin{definition}
[\e{Bag-set certain answers}]
Let $\tup{\deps, \DBI}$ be an IDB and
$q$ a UCQ of arity $n$ over a schema \e{R}. We define the {\em bag-set certain
answers} of $q$ as the bag:
$$
q[\deps, \DBI] := \bag{\vec{c} \in \adom{\DBI}^n \mid \vec{c} = \assg(\vec{x}) \text{ for some } \assg \text{ s.t. } 
\deps \cup \DBI \models q\assg}. 
$$
\end{definition}

Notice that
by answers we understand the tuples returned by the satisfying groundings.
Notice, too, that in the first case we check for groundings that are satisfying
over {\em one} single DB and in the second for groundings
that are satisfying for a whole space of DBs, namely, {\em all}
the models of the IDB. 

Chases generate DBs (a.k.a. {\em canonical} or {\em universal} DBs) 
that minimally complete the information held by an IDB
\cite{ChaseRev2008}.
Chases are linked to dependencies. In what follows we will consider
only two general kinds of such dependencies, namely, 
{\em inclusion dependencies} (IDs), which are axioms 
of the form $\forall \vec{x} (R(\vec{x}) \to R'(\vec{x}))$,
and {\em tuple-generating dependencies} (TGDs), which are
axioms of the form 
$\forall \vec{x} (R(\vec{x}) \to \exists \vec{y} R'(\vec{x}, \vec{y}))$,
as is common in IDB literature 
\cite{Abiteboul1995,Cali2004,Klug1982B}.

\begin{definition}
\e{(Chase rules)}
Let $\deps$ 
be a (finite) set of IDs and TGDs. The set $R$ of {\em chase rules}
generate new facts by applying the dependencies $\rho \in \deps$
to previously computed facts:
\begin{center}
\small{
\begin{tabular}{c}
\AxiomC{$f := R(\vec{c})$}
\AxiomC{$\rho := \forall \vec{x} (R(\vec{x}) \to R'(\vec{x}))$}
\LeftLabel{\normal ID}
\BinaryInfC{$f_{new} := R'(\vec{c})$}
\DisplayProof\\
\,\\
\AxiomC{$f := R(\vec{c})$}
\AxiomC{$\rho := \forall \vec{x} (R(\vec{x}) \to \exists \vec{y} R'(\vec{x}, \vec{y}))$}
\LeftLabel{\normal TGD}
\BinaryInfC{$f_{new} := R'(\vec{c}, \vec{c}_{new})$}
\DisplayProof
\end{tabular}
}
\end{center}
Where $\vec{c}_{new}$ is a sequence of fresh constants from \Dom. 
Note that there will be as many rules as dependencies in $\deps$. We will
denote by $R$ the set of all chase rules (for a given set $\deps$ of 
integrity constraints). Notice that rules in $R$ are in one-to-one
correspondence with dependencies in $\Sigma$.
\end{definition}

We say that a fact $f$ is {\em derivable from
$\deps$ and \DBI modulo} $R$ and write $\deps \cup \DBI \vdash_{R} f$
whenever \textit{(i)} $f \in \DBI$ or \textit{(ii)} 
there exists a finite sequence $f_0,...,f_n$ such that 
$f_n = f$ and for each $i \in [0,n]$, either
$f_i \in \DBI$ or there exists
$j \leq i$ s.t. $f_i$ follows from $\deps$ and $f_j$ by some
rule in $R$.
The chase rules are sound and complete 
w.r.t. (standard FO) entailment, in the sense that, for
all facts $f \in \fact{\schema}$,
$\deps \union \DBI \vdash_{R} f$ 
iff $\deps \union \DBI \models f$.

\begin{definition}
\e{(Chase)}
Let $\tup{\deps, \DBI}$ be an IDB
over \schema with $\deps$ a set of IDs and TGDs. 
The {\em chase} of $\tup{\deps, \DBI}$, 
denoted $\chase{\deps,\DBI}$,
is defined inductively
as follows:
\begin{itemize}
\item $\chases{0}{\deps,\DBI} := \DBI$.
\item $\chases{i+1}{\deps,\DBI} := \chases{i}{\deps,\DBI} \union \{f_{new}\}$.
\item $\chase{\deps,\DBI} := \arunion{i \in \Nat}{}\chases{i}{\deps,\DBI}$.
\end{itemize}
Where $f_{new}$ follows from $\chases{i}{\deps,\DBI}$
{\em modulo} $R$. The integer 
$i \in \Nat$ is called the {\em level} of the chase. Notice that
we are iteratively generating a (potentially infinite) instance of \schema
from ground atoms contained by \DBI. Furthermore, 
for each $i \in \Nat$, $\chases{i}{\deps,\DBI}$ is an instance of
\schema.
\end{definition}

Chases can also be seen (in FO), 
as the deductive closure $(\deps \union \DBI)^{\vdash_{R}}$
of the IDB (seen as a FO axiomatisation) w.r.t. the set
$\fact{\schema}$ of atomic facts over $\schema$.
Formally, $(\deps \union \DBI)^{\vdash_{R}} := 
\{f \in \fact{\schema} \mid \deps \union \DBI
\vdash_{R} f \text{ implies } f \in  
(\deps \union \DBI)^{\vdash_{R}} \}$. 

\begin{proposition}
Let $\tup{\deps, \DBI}$ 
be an IDB over \e{R}, where $\deps$ is a set of 
IDs and TGDs. Then, $(\deps \union \DBI)^{\vdash_{R}} = \chase{\deps,\DBI}$.
\end{proposition}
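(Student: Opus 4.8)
The plan is to establish the two inclusions $\chase{\deps,\DBI} \incl (\deps \union \DBI)^{\vdash_{R}}$ and $(\deps \union \DBI)^{\vdash_{R}} \incl \chase{\deps,\DBI}$ separately, in each case by induction on a natural-number parameter attached to the construction at hand. Throughout I read $(\deps \union \DBI)^{\vdash_{R}}$ as the set of facts $f \in \fact{\schema}$ with $\deps \union \DBI \vdash_{R} f$, and I recall that $\chase{\deps,\DBI} = \arunion{i \in \Nat}{}\chases{i}{\deps,\DBI}$ with the stages cumulative, $\chases{i}{\deps,\DBI} \incl \chases{i+1}{\deps,\DBI}$.

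For $\chase{\deps,\DBI} \incl (\deps \union \DBI)^{\vdash_{R}}$ I would show $\chases{i}{\deps,\DBI} \incl (\deps \union \DBI)^{\vdash_{R}}$ for every $i \in \Nat$ by induction on $i$ and then take the union. The base case is immediate since $\chases{0}{\deps,\DBI} = \DBI$ and every $f \in \DBI$ is derivable by clause \textit{(i)} of the definition of $\vdash_{R}$. For the step, $\chases{i+1}{\deps,\DBI}$ adds one fact $f_{new}$ obtained from some $f \in \chases{i}{\deps,\DBI}$ by an instance of the ID or TGD chase rule; by the induction hypothesis there is a derivation sequence witnessing $\deps \union \DBI \vdash_{R} f$, and appending $f_{new}$ to it witnesses $\deps \union \DBI \vdash_{R} f_{new}$.

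For the converse I would take $f$ with $\deps \union \DBI \vdash_{R} f$, fix a witnessing sequence $f_0,\dots,f_n = f$, and prove by induction on $n$ that every $f_i$ lies in $\chases{k}{\deps,\DBI}$ for some $k$, so that $f = f_n \in \chase{\deps,\DBI}$. If $f_i \in \DBI$ then $f_i \in \chases{0}{\deps,\DBI}$; otherwise $f_i$ follows from $\deps$ and some earlier $f_j$ ($j \leq i$) by a chase rule, and by the induction hypothesis $f_j \in \chases{k}{\deps,\DBI}$ for some $k$, whence --- the relevant rule being applicable from stage $k$ onward and, under a fair application order, eventually fired --- $f_i \in \chases{k'}{\deps,\DBI}$ for some $k'$.

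The only delicate point, and what I expect to consume most of the write-up, is reconciling the two ``one-step'' formalisms: the relation $\vdash_{R}$ uses a flat sequence that may reuse any earlier fact, whereas the chase proceeds stage by stage, adding a single new fact per stage with a nondeterministic choice of which applicable fact to add. The clean way around this is to note that $\chase{\deps,\DBI}$, taken under a fair strategy, is the least subset of $\fact{\schema}$ containing $\DBI$ and closed under the rules of $R$; the first inclusion then reduces to the observation that $\chase{\deps,\DBI}$ is itself such a closed set containing $\DBI$, and the second to the fact that $(\deps \union \DBI)^{\vdash_{R}}$ is a closed superset of $\DBI$ together with the minimality of $\chase{\deps,\DBI}$. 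Note that the soundness and completeness of $R$ with respect to FO entailment recalled just above is not needed here: this proposition is purely a statement about the syntactic relation $\vdash_{R}$, and it is that earlier result, not this one, that connects everything back to $\models$.
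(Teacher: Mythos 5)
The paper states this proposition without any proof, so there is no ``paper's own proof'' to compare against; judged on its own terms, your double-inclusion argument is correct and is surely what the author had in mind. The first inclusion, $\chases{i}{\deps,\DBI} \incl (\deps \union \DBI)^{\vdash_{R}}$ by induction on the level $i$, is routine, and you are right to isolate the real content of the converse: as literally defined, $\chases{i+1}{\deps,\DBI}$ adds \emph{one} nondeterministically chosen $f_{new}$ per level, and nothing in the definition forbids a run that keeps firing one applicable rule instance while starving another, in which case a derivable fact would never enter $\chase{\deps,\DBI}$. Your repair --- read the chase under a fair strategy as the least superset of $\DBI$ closed under the rules in $R$, then get one inclusion from closedness and the other from minimality --- is the clean way to do it, and you are also correct that the soundness and completeness of $R$ with respect to $\models$ is irrelevant to this purely syntactic statement.

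The one point you pass over silently is the handling of the fresh constants introduced by the TGD rule. A derivation sequence witnessing $\deps \union \DBI \vdash_{R} f$ may choose its fresh constants $\vec{c}_{new}$ differently from the chase run, so $(\deps \union \DBI)^{\vdash_{R}}$, read literally, contains $R'(\vec{c},\vec{d})$ for \emph{every} admissible choice of fresh $\vec{d}$, whereas $\chase{\deps,\DBI}$ contains only the one witness it happened to generate. The stated equality therefore holds only up to renaming of the freshly introduced constants (or under a convention that fixes a canonical fresh constant for each rule application on both sides); your induction implicitly identifies the two choices in both directions. This is a defect inherited from the paper's definitions rather than an error in your argument, but a complete write-up should state the convention explicitly.
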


\begin{theorem}\label{cert}
Let $\tup{\deps, \DBI}$ 
be an IDB over \e{R}, with $\deps$ a set of 
IDs and TGDs. Let $q$ be a UCQ and
$\assg$ an arbitrary grounding. Then,
$
\chase{\deps, \DBI} \models q\assg \text{ iff } \Sigma \union
\DBI \models q\assg.
$
\end{theorem}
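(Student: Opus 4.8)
The plan is to establish the two implications separately, relying on the preceding Proposition (which identifies $\chase{\deps,\DBI}$ with the closure $(\deps\cup\DBI)^{\vdash_{R}}$) together with the stated soundness and completeness of the chase rules for atomic facts, i.e.\ $\deps\cup\DBI\vdash_{R}f$ iff $\deps\cup\DBI\models f$ for every $f\in\fact{\schema}$.

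For the ``if'' direction I would argue that $\chase{\deps,\DBI}$ is itself a model of $\deps\cup\DBI$ — the ``universal model'' property. Containment of $\DBI$ is immediate since $\chases{0}{\deps,\DBI}=\DBI$. To see that the (possibly infinite) instance $\chase{\deps,\DBI}$ satisfies every $\rho\in\deps$, I would use that the chase is built fairly: any atom $R(\vec{c})\in\chase{\deps,\DBI}$ already appears at some finite level $\chases{i}{\deps,\DBI}$, and the chase rule associated with $\rho$ is eventually applied to it, adding $R'(\vec{c})$ when $\rho$ is an ID and $R'(\vec{c},\vec{c}_{new})$ for fresh $\vec{c}_{new}$ when $\rho$ is a TGD; either way $\rho$ holds in $\chase{\deps,\DBI}$. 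Hence $\chase{\deps,\DBI}$ is among the structures quantified over in ``$\deps\cup\DBI\models q\assg$'', and that hypothesis yields $\chase{\deps,\DBI}\models q\assg$ at once.

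For the ``only if'' direction, assume $\chase{\deps,\DBI}\models q\assg$. Since $\assg$ substitutes a constant for every variable of $q$, the formula $q\assg$ is variable-free, of the form $\phi_{0}\assg\lor\cdots\lor\phi_{k}\assg$ with each $\phi_{i}\assg$ a conjunction of ground atoms; so satisfaction in $\chase{\deps,\DBI}$ means that for some $i$ every atom $f$ occurring in $\phi_{i}\assg$ belongs to $\chase{\deps,\DBI}$. By the Proposition, $f\in(\deps\cup\DBI)^{\vdash_{R}}$, i.e.\ $\deps\cup\DBI\vdash_{R}f$, and by soundness of the chase rules $\deps\cup\DBI\models f$, for each such $f$. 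Since entailment of a conjunction amounts to entailment of each conjunct, this gives $\deps\cup\DBI\models\phi_{i}\assg$, and therefore $\deps\cup\DBI\models\phi_{0}\assg\lor\cdots\lor\phi_{k}\assg=q\assg$.

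I expect the only delicate point to lie in the ``if'' direction: the bare inductive definition of the chase above only commits to adding \emph{some} new fact $f_{new}$ per step, so strictly one must recall (from the cited chase literature) that a fair application strategy is intended — without fairness $\chase{\deps,\DBI}$ need not satisfy $\deps$, and in particular a TGD $\forall\vec{x}(R(\vec{x})\to\exists\vec{y}R'(\vec{x},\vec{y}))$ could fail even when $\deps\cup\DBI\models\exists\vec{y}R'(\vec{c},\vec{y})$, since no single witness atom is itself certain. The remainder is routine: the ``only if'' direction goes through once one observes that a fully present ground disjunct witnesses entailment and that soundness of $\vdash_{R}$ lifts atom-wise entailment to the whole conjunction.
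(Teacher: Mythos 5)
Your proof is correct, and your ($\Rightarrow$) direction is essentially the paper's: both of you observe that satisfaction of the ground formula $q\assg$ in $\chase{\deps,\DBI} = (\deps\union\DBI)^{\vdash_{R}}$ means that all atoms of some disjunct of $q\assg$ are derivable, and then invoke soundness of the chase rules atom by atom; you are merely more explicit about selecting the witnessing disjunct and lifting entailment from atoms to the conjunction and then to the disjunction. The interesting divergence is in ($\Leftarrow$). The paper dispatches it in one line from the inclusion $\deps\union\DBI \incl (\deps\union\DBI)^{\vdash_{R}} = \chase{\deps,\DBI}$ together with monotonicity of entailment, whereas you prove that $\chase{\deps,\DBI}$ is itself a model of $\deps\union\DBI$ (the universal-model property, obtained via fairness of the rule applications) and then instantiate the entailment hypothesis at that particular model. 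Your route is the stronger one here: the closure $(\deps\union\DBI)^{\vdash_{R}}$ is by definition a set of \emph{facts}, so the dependencies in $\deps$ are not literally elements of it, and the paper's inclusion only yields $\DBI\incl\chase{\deps,\DBI}$ as far as monotonicity is concerned; what actually makes the direction go through is precisely that the chase absorbs the consequences of $\deps$, i.e.\ that it satisfies every dependency --- which is what you prove. Your caveat about fairness is also well taken: the inductive definition in the paper only commits to adding \emph{some} $f_{new}$ at each level, so without a fairness assumption (implicit in the cited chase literature) the limit need not satisfy every dependency, and the universal-model argument --- and with it the theorem --- would fail. In short, your proof agrees with the paper on ($\Rightarrow$) and fills a real gap in the published one-liner for ($\Leftarrow$).
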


\begin{proof} 
The sense ($\Leftarrow$) is immediate, since $\deps \union
\DBI \incl (\deps \union
\DBI)^{\vdash_{R}} 
= \chase{\deps, \DBI}$. For
($\Rightarrow$), obeserve that
$(\deps \union \DBI)^{\vdash_{R}} = 
\chase{\deps, \DBI}$. Therefore, there
exists a finite sequence of applications of chase rules starting from
$\tup{\deps, \DBI}$ and ending, at some level $k$ of the chase, with all the facts in 
$q\assg$. But then,
since chase rules are sound w.r.t. entailment, $\deps \union
\DBI \models q\assg$.\qued
\end{proof}

\begin{lemma}\label{assgn}
Let $\tup{\deps, \DBI}$ 
be an IDB and $q$ a UCQ
Then:
\begin{equation*}
\begin{array}{rc}
\set{\assg \in \-^{\Var{q}}\Dom \mid \deps \union \DBI \models q\assg} 
& = \\
\set{\assg \in \-^{\Var{q}}\Dom \mid \chase{\deps, \DBI}\models q\assg}
\end{array}
\end{equation*}
\end{lemma}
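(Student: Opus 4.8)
The plan is to reduce this lemma to Theorem~\ref{cert}, which already gives the equivalence $\chase{\deps,\DBI} \models q\assg$ iff $\deps \union \DBI \models q\assg$ for an arbitrary fixed grounding $\assg$. First I would observe that the two sets displayed in the lemma are both subsets of the same index set $^{\Var{q}}\Dom$, so establishing set equality amounts to showing, for each $\assg \in {}^{\Var{q}}\Dom$, that $\assg$ belongs to the left-hand set precisely when it belongs to the right-hand set. That biconditional is, element by element, exactly the statement of Theorem~\ref{cert}. Hence the proof is essentially a one-line appeal to Theorem~\ref{cert} applied pointwise over all groundings $\assg$.

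More explicitly, I would write: take an arbitrary $\assg \in {}^{\Var{q}}\Dom$. By Theorem~\ref{cert} we have $\deps \union \DBI \models q\assg$ iff $\chase{\deps,\DBI} \models q\assg$. Therefore $\assg$ satisfies the membership condition on the left iff it satisfies the membership condition on the right, and since $\assg$ was arbitrary the two sets coincide. I would perhaps remark that nothing in this argument uses boolean-ness or a particular arity, and that it applies uniformly whether $q$ is a single CQ or a genuine union, because Theorem~\ref{cert} is already stated for UCQs.

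The only point that requires a moment's care — and the closest thing to an obstacle — is making sure the quantifier scope is handled correctly: Theorem~\ref{cert} is stated for "an arbitrary grounding $\assg$", so it is legitimate to instantiate it separately at each $\assg$ and then collect the results; there is no uniformity or finiteness issue to worry about even though $\chase{\deps,\DBI}$ may be infinite, since entailment of the (finite) formula $q\assg$ only ever depends on finitely many facts. I would also double-check that both sides range over the same domain of groundings $^{\Var{q}}\Dom$ (they do, by inspection of the definitions), so that the equality is a genuine set identity and not merely an identity "up to the relevant domain." With those checks in place the proof is complete.

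\begin{proof}
Both sets are subsets of $^{\Var{q}}\Dom$, so it suffices to show that an arbitrary grounding $\assg \in {}^{\Var{q}}\Dom$ lies in the first set iff it lies in the second. Fix such a $\assg$. By Theorem~\ref{cert}, $\deps \union \DBI \models q\assg$ iff $\chase{\deps,\DBI} \models q\assg$. Hence $\assg$ satisfies the defining condition of the left-hand set iff it satisfies the defining condition of the right-hand set. Since $\assg$ was arbitrary, the two sets are equal.\qued
\end{proof}
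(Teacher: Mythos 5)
Your proof is correct and follows essentially the same route as the paper: the paper's own argument is exactly a pointwise application of Theorem~\ref{cert} to an arbitrary grounding $\assg \in {}^{\Var{q}}\Dom$, from which the set equality follows. Your additional remarks on quantifier scope and the common index set are sound but not needed beyond what the paper states.
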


\begin{proof}
Let $\assg \in \-^{\Var{q}}\Dom$. By Theorem~\ref{cert},
$\deps \union \DBI \models q\assg$ iff $\chase{\deps, \DBI}\models q\assg$,
whence the result. \qued
\end{proof}

\begin{theorem}\label{equiv}
Let $\tup{\deps, \DBI}$ 
be an IDB with $\deps$ a set of 
IDs and TGDs. Let $q$ be a CQ
of arity $n$ and distinguished variables $\vec{x}$. Then it holds that:
\begin{equation*}
\begin{array}{rc}
\bag{\assg(\vec{x}) \in \adom{\DBI}^n \mid \assg \text{ s.t. } \deps \union \DBI \models q\assg}
& = \\
\bag{\assg(\vec{x}) \in \adom{\DBI}^n \mid \assg \text{ s.t. } \chase{\deps, \DBI} \models q\assg}
& 
\end{array}
\end{equation*}
\end{theorem}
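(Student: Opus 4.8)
The plan is to obtain the claimed identity of bags directly from the set-level identity of satisfying groundings supplied by Lemma~\ref{assgn}, by observing that the multiplicity of an answer tuple in either bag is just the cardinality of a fibre of the projection $\assg \mapsto \assg(\vec{x})$ restricted to those groundings.

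First I would unfold the two bags. Write $G := \set{\assg \in {}^{\Var{q}}\Dom \mid \deps \union \DBI \models q\assg}$ for the set of groundings satisfying $q$ over the IDB, and $H := \set{\assg \in {}^{\Var{q}}\Dom \mid \chase{\deps,\DBI} \models q\assg}$ for those satisfying $q$ over the chase. Reading off the intensional bag notation, the left-hand bag is supported on $\adom{\DBI}^{n}$ and assigns to each $\vec{c}$ there the multiplicity
$$
m_{1}(\vec{c}) \;=\; \bigl|\set{\assg \in G \mid \assg(\vec{x}) = \vec{c}}\bigr| ,
$$
while the right-hand bag, supported on the same carrier, assigns
$$
m_{2}(\vec{c}) \;=\; \bigl|\set{\assg \in H \mid \assg(\vec{x}) = \vec{c}}\bigr| .
$$
Since two bags over the same set coincide exactly when their multiplicity functions agree pointwise, it suffices to prove $m_{1}(\vec{c}) = m_{2}(\vec{c})$ for every $\vec{c} \in \adom{\DBI}^{n}$.

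Next I would invoke Lemma~\ref{assgn} --- which applies since every CQ is a UCQ --- to get $G = H$. Intersecting both sides of this equality with $\set{\assg \mid \assg(\vec{x}) = \vec{c}}$ gives, for each fixed $\vec{c} \in \adom{\DBI}^{n}$,
$$
\set{\assg \in G \mid \assg(\vec{x}) = \vec{c}} \;=\; \set{\assg \in H \mid \assg(\vec{x}) = \vec{c}} ,
$$
and passing to cardinalities yields $m_{1}(\vec{c}) = m_{2}(\vec{c})$. As this holds for all $\vec{c}$, the two bags are equal.

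I do not expect a genuine obstacle here: the real content --- that precisely the same groundings satisfy $q$ over the IDB and over its chase --- has already been discharged by Theorem~\ref{cert} and Lemma~\ref{assgn}, and what remains is the bookkeeping observation that bag-set answers are obtained from those groundings by a projection whose fibres determine the multiplicities (possibly $\infty$, when $\chase{\deps,\DBI}$ is infinite, which the codomain $\Nat \union \set{\infty}$ of a bag already allows). The single point that warrants care is the active-domain side condition: the distinguished variables $\vec{x}$ are confined to $\adom{\DBI}$ on both sides of the asserted equality, whereas the non-distinguished variables of $q$ may be sent by $\assg$ to the fresh constants generated along the chase; since the restriction $\assg(\vec{x}) \in \adom{\DBI}^{n}$ is imposed symmetrically, it merely intersects both (already equal) fibres and leaves the counts untouched. \qued
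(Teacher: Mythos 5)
Your proof is correct and rests on the same key ingredient as the paper's own argument, namely Lemma~\ref{assgn}: once the sets of satisfying groundings over $\deps \union \DBI$ and over $\chase{\deps,\DBI}$ are known to coincide, the two multiplicities of any tuple $\vec{c}$, being the cardinalities of the fibres of $\assg \mapsto \assg(\vec{x})$ over those (equal) sets, must agree. The only difference is presentational: the paper phrases this as a proof by contradiction with a case analysis on a pair of groundings witnessing a multiplicity discrepancy, whereas your direct pointwise comparison of multiplicity functions reaches the same conclusion more cleanly and also makes explicit the points the paper leaves implicit (possibly infinite fibres, and the symmetry of the active-domain restriction).
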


\begin{proof}
Put, for brevity, 
$B := \bag{\assg(\vec{x}) \in \adom{\DBI}^n \mid \assg \text{ s.t. } 
\chase{\deps, \DBI} \models q\assg}$ and
$B' := \bag{\assg(\vec{x}) \in \adom{\DBI}^n \mid \assg \text{ s.t. } 
\deps \union \DBI \models q\assg}$.
Assume now for contradiction that there exists a tuple $\vec{c}$ s.t., w.l.o.g., $B(\vec{c}) >
B'(\vec{c})$ (the other case is analogous). 
This implies that for some pair of groundings $\assg, \assg'$
from $\Var{q}$ to \Dom:
\begin{itemize}
\item Either $\assg = \assg'$. Then, since, by Lemma~\ref{assgn}
the set of groundings $\assg \colon \Var{q} \to \Dom$ s.t.
$\chase{\deps, \DBI} \models q\assg$
or $\deps \union \DBI \models q\assg$
are exactly the same, it should be the case that
$\vec{c}$ occurs in $B$ and $B'$ with the same multiplicities,
which is impossible.
\item Or $\assg \neq \assg'$. In which case it holds that:
\begin{itemize}
\item[\textit{(i)}] $\assg(\vec{x}) = \assg'(\vec{x}) := \vec{c}$ and
$\assg$ and $\assg'$ are satisfying assignments for
$q$ and $\chase{\deps, \DBI}$.
\item[\textit{(ii)}] either $\assg$ or $\assg'$ is satisfying for $q$ and
$\deps \union \DBI$, but not both.
\end{itemize}
Assume that the non satisfying
grounding is $\assg$ and suppose, moreover, that no such other grounding $\assg''$
behaves like $\assg$ or $\assg'$. Following, again,
Lemma~\ref{assgn}, if it is the case that $\chase{\deps, \DBI} \models q\assg$,
then $\deps \union \DBI \models q\assg$, i.e., $\assg$ is a satisfying
grounding for this entailment. But this is impossible. \qued
\end{itemize}
\end{proof}

\begin{corollary}\label{red-chase}
Let $\tup{\deps, \DBI}$ 
be an IDB with $\deps$ a set of 
IDs and TGDs. Let $q$ be a UCQ. Then,
$
q[\deps, \DBI] = q[\chase{\deps, \DBI}].
$
\end{corollary}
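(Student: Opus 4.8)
The plan is to reduce the corollary to Theorem~\ref{equiv} after unfolding the $q[\cdot]$-notation and extending the argument from conjunctive queries to unions of conjunctive queries. By the definition of bag-set certain answers, $q[\deps,\DBI] = \bag{\vec{c} \in \adom{\DBI}^n \mid \vec{c} = \assg(\vec{x}) \text{ for some } \assg \text{ s.t. } \deps \union \DBI \models q\assg}$, so the multiplicity of a tuple $\vec{c}$ in $q[\deps,\DBI]$ is the number of groundings $\assg$ of $q$ with $\assg(\vec{x}) = \vec{c}$ and $\deps \union \DBI \models q\assg$; dually, by the definition of bag-set answers applied to the single database $\chase{\deps,\DBI}$, the multiplicity of $\vec{c}$ in $q[\chase{\deps,\DBI}]$ is the number of groundings $\assg$ with $\assg(\vec{x}) = \vec{c}$ and $\chase{\deps,\DBI} \models q\assg$. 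Hence both bags are the image of a set of satisfying groundings under the map $\assg \mapsto \assg(\vec{x})$, and it suffices to compare those two sets.

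First I would invoke Lemma~\ref{assgn}, which is already stated for UCQs, to conclude that $\set{\assg \mid \deps \union \DBI \models q\assg} = \set{\assg \mid \chase{\deps, \DBI} \models q\assg}$; since equal sets have equal images under a fixed map, the two bags carry the same multiplicity at every tuple. For $q$ a CQ this is exactly Theorem~\ref{equiv}, so there is nothing further to do; for a genuine union $q(\vec{x}) \rul \phi_0 \lor \dots \lor \phi_k$ the same contradiction argument that proves Theorem~\ref{equiv} applies verbatim, because it appeals only to Lemma~\ref{assgn} and to the active-domain restriction, neither of which is sensitive to whether $q$ is a CQ or a UCQ. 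I would explicitly warn against the tempting shortcut of bag-adding the corollary over the disjuncts $\phi_i$: a grounding may satisfy several disjuncts while contributing only once to $q[\deps,\DBI]$, and the disjuncts need not share the same set of variables, so the per-disjunct multiplicities would not add up to the correct total.

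The step that needs the most care — and the main obstacle — is the mismatch between the index sets of the two bags: $q[\deps,\DBI]$ is supported on $\adom{\DBI}^n$, whereas $q[\chase{\deps,\DBI}]$ is a priori supported on $\adom{\chase{\deps,\DBI}}^n$, which also contains the fresh constants introduced by the TGD chase rule. I would close this gap by showing that $q[\chase{\deps,\DBI}]$ assigns multiplicity $0$ to every tuple of $\adom{\chase{\deps,\DBI}}^n \setminus \adom{\DBI}^n$: if $\vec{c}$ contains a fresh chase constant and $\assg(\vec{x}) = \vec{c}$, then by Lemma~\ref{assgn} (equivalently Theorem~\ref{cert}) a grounding $\assg$ with $\chase{\deps,\DBI} \models q\assg$ would also satisfy $\deps \union \DBI \models q\assg$, making $\vec{c}$ a bag-set certain answer — impossible, since chase constants occur neither in $\deps$ nor in \DBI, so a run of the chase that chooses its fresh constants away from $\vec{c}$ yields a model of $\deps \union \DBI$ in which $q\assg$ fails. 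Thus both bags are in fact supported on $\adom{\DBI}^n$, where the multiplicity equality established above applies, and therefore $q[\deps, \DBI] = q[\chase{\deps, \DBI}]$. \qued
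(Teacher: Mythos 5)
Your proof is correct and follows the route the paper intends: the corollary is stated there without any proof, as an immediate consequence of Theorem~\ref{equiv} (via Lemma~\ref{assgn}), which is exactly the reduction you carry out after unfolding the $q[\cdot]$ notation. You in fact do somewhat more than the paper: your explicit handling of the CQ-versus-UCQ mismatch between Theorem~\ref{equiv} and the corollary, and of the support mismatch between $\adom{\DBI}^n$ and $\adom{\chase{\deps,\DBI}}^n$ caused by the fresh constants of the TGD chase rule, closes two details that the paper leaves implicit.
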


Corrollary~\ref{red-chase} provides an efficient way of
computing $q[\deps,\DBI]$. 
We can apply
results from description logics expressive enough to capture
IDs and TGDs \cite{Calvanese2007C}. We can use a so-called
{\em perfect rewriting} algorithm, $\textsc{PerfectRef}(\cdot,\cdot,\cdot)$,
such that $\textsc{PerfectRef}(q,\deps,\DBI) = q[\chase{\deps, \DBI}]$.
What a perfect rewriting does is to, so to speak, "compile"
the dependencies in $\deps$ into the UCQ $q$ by rewriting
$q$'s body accordingly (UCQs are closed under such rewritings). Next,
we can evaluate this (expanded) UCQ over \DBI, thus reducing
query evaluation over IDBs to query evaluation over DBs.
Since this does not affect the {\em data complexity} of query evaluation, 
i.e., the complexity of
evaluating $q$ over IDB $\tup{\deps,\DBI}$ when measured only
w.r.t. the number of tuples occuring in \DBI, we get
\LogSpace data complexity -- the complexity of query evaluation
over plain DBs with SQL queries \cite{Abiteboul1995,Vardi1982}.

\section{Conclusions and Related Work}\label{sec5}

We have showed that chases can be used to compute
not only sets of certain answers, but also their bags.
This is because when we apply chase rules, we reason basically
over the satisfying groundings, and collect later, in a set, the
bindings $\vec{c}$ of an UCQ $q$'s distinguished variables
$\vec{x}$. Chases are also sound and complete w.r.t.
bag-set certain answers when IDBs make use only of IDs and TGDs.

A particularly interesting field of application for these bag chase
techniques are SQL aggregate queries (AQs), e.g.,
\COUNT, \SUM, or \AVG-queries with no nested conditions
\cite{Cohen2007,Abiteboul1995}, in incomplete information settings.
The semantics of those queries in the relational DB setting typically
involves aggregates (bags) as arguments of the aggregation functions.
In \cite{Thorne2008B} we show that a semantics for AQs 
over ontologies and IDBs can be achieved by
returning first the certain answers over
$\tup{\deps, \DBI}$ of an 
{\em auxiliary} UCQ $q_{\textit{aux}}$ associated to
the AQ $q$ and aggregating later
over those certain answers, yieldng a unique value. 
This constrasts with \cite{Kolaitis2008}, where
chases applied to AQs yield intervals of (dense) values.
The results of this paper imply that this can be done efficiently and 
without deleting duplicates using, 
say, $\textsc{PerfectRef}(\cdot,\cdot,\cdot)$, when
$\deps$ is a set of IDs and TGDs. 

\bibliographystyle{plain}
\bibliography{chases}

\end{document}